\newtheorem{theorem}{Theorem}
\newtheorem{lemma}[theorem]{Lemma}
\newtheorem{problem}[theorem]{Problem}
\DeclareMathOperator*{\argmax}{argmax}
\newcommand{\R}{\mathbb{R}}
\newcommand{\at}[2][]{#1|_{#2}}
\title{Incentivizing Federated Learning}
\author{
 Shuyu Kong \\
  Department of Electrical and Computer Engineering\\
  Northwestern University\\
  Evanston, IL 60201 \\
  \texttt{shuyukong2020@u.northwestern.edu} \\
   \And
 You Li \\
  Department of Electrical and Computer Engineering\\
  Northwestern University\\
  Evanston, IL 60201 \\
  \texttt{you.li@u.northwestern.edu} \\
  \And
 Hai Zhou \\
  Department of Electrical and Computer Engineering\\
  Northwestern University\\
  Evanston, IL 60201 \\
  \texttt{haizhou@northwestern.edu} \\
}
\begin{document}

\maketitle
\begin{abstract}
Federated Learning is an emerging distributed collaborative learning paradigm used by many of applications nowadays. The effectiveness of federated learning relies on clients' collective efforts and their willingness to contribute local data. However, due to privacy concerns and the costs of data collection and model training, clients may not always contribute all the data they possess, which would negatively affect the performance of the global model.

This paper presents an incentive mechanism that encourages clients to contribute as much data as they can obtain. Unlike previous incentive mechanisms, our approach does not monetize data. 
Instead, we implicitly use model performance as a reward, i.e., significant contributors are paid off with better models. We theoretically prove that clients will use as much data as they can possibly possess to participate in federated learning under certain conditions with our incentive mechanism.

\end{abstract}
\section{Introduction}

Deep learning is data-hungry. Nevertheless, a single party may not have sufficient data for many deep learning training tasks. For example, a hospital that wants to train a disease diagnosis classifier may not have enough cases. Federated learning (FL) is brought up to leverage the strength of massive data distributed among different clients securely and privately. The clients who participate in federated learning collaboratively complete a deep learning task under the coordination of a centralized server or a service provider called the federator. The raw data of each client is stored locally and not transferred. Instead, model parameters or model updates are exchanged to complete the task.

It appears that FL avoids data sharing among clients. However, in the basic FL setting, the centralized server will distribute the same global model to all the clients. As a result, those clients who make only minimal contributions benefit in the same way as those who make significant contributions. Therefore, clients in practice may intentionally contribute fewer or even no data in FL to reduce local training costs and privacy risks. The performance of the global model is only slightly affected if only a few clients "cheat" in this way. 
Nevertheless, suppose all clients have such "selfish" mindsets that drive them to contribute little while hoping to gain a lot. In that case, there is no way the centralized server could produce a well-performed global model on low-quality training data.
Consequently, every client ends up obtaining a worse model, making the whole FL process meaningless.

Recently, many incentive mechanisms have been proposed to incentivize clients to contribute more data~\cite{sarikaya2019motivating, Zhan20Learning, feng2018joint, Zeng_2020, jiao2020automated,Le20AucCellular}. However, almost all those incentive mechanisms use money as a reward. 
We believe payments could be obstacles for many applications. Furthermore, the value of high-performance models is not always measurable, and many clients participate in seek of a model instead of money. For those clients, it is hard to compensate for losses on model performance with money.

This paper proposes a non-monetary incentive mechanism that uses model performance as a reward. In our mechanism, the server will evaluate the performance of every client's uploaded model in each round and distribute different models to clients based on the evaluation results. With the assumption that clients who contribute better quality data will upload a better local model, the objective of our incentive mechanism is to reward significant contributors with better aggregated models. We prove that our approach satisfies the baseline incentive requirements and can drive clients to contribute the maximum amount of data under certain conditions.

\section{Related Work}

\subsection{Incentive Mechanisms in Federated Learning}

Recently, various techniques including Stackelberg game, auction, contract, Shapley value, blockchain and reinforcement learning, have been adopted as the incentive
mechanisms for federated learning.
Stackelberg game~\cite{sarikaya2019motivating, Zhan20Learning, feng2018joint}, auction~\cite{Zeng_2020, jiao2020automated,Le20AucCellular},
and contract~\cite{kang2019incentive, Ding20MultiDimensional} are mainly employed for node
selection and payment allocation, while Shapley value~\cite{Wnag19MeasureContribution, Song19ProfitAllocation} is used for contribution measurement. Both
blockchain~\cite{liu2020fedcoin} and reinforcement learning~\cite{jiao2020automated, Wang20OptimizingNonIID} are auxiliary techniques to improve performance and
robustness.

\subsection{Fairness in Federated Learning}

Fairness for federated learning is another close topic to our research. Li~\cite{li2020fair} proposes q-FFL, a novel optimization objective that addresses fairness issues in federated learning. Inspired by fair resource allocation for wireless networks, q-FFL minimizes an aggregate reweighted loss parameterized by q, such that the devices with higher loss are given higher relative weight. Authors claim that q-FFL maintains the same overall average accuracy while
ensuring a more fair and uniform service quality across
the network. Agnostic Federated Learning~\cite{mohri2019agnostic} is proposed to resolve the trade-off between performance and fairness~\cite{huang2020fairness}.

Those approaches allocate appropriate weights in aggregating client models to simultaneously achieve fairness and performance. Hence, they still lie in the vanilla federated learning category and do not necessarily encourage clients to provide more data. Compared with these studies, our work takes one step further in the sense that we achieve fairness in learning and leverage fairness to incentivize all the clients to contribute as much data as they possess. 
    
\subsection{Privacy Threat in Federated Learning}

Several earlier studies demonstrate that attackers could leverage the model information uploaded by clients, e.g., gradient, to infer private data~\cite{NEURIPS2019_60a6c400, zhao2020idlg}. This imposes a privacy threat on those clients whose data is confidential. As a result, clients may not fully commit to contributing all their data. This phenomenon has motivated us to develop our incentive mechanism.

\subsection{Personalization of Federated Learning}

While vanilla federated learning aims to train a single global model across decentralized local datasets, a monolithic model may not fit all clients. Federated personalization allows a client to obtain a more potent model per the client's specific objective. Rather than averaging with constant weights for the entire federation, optimal weighted model combinations are computed for each client after figuring out how much it can benefit from every other client's model. 

Personalization in FL shares a similar methodology with our incentive mechanism - different clients are distributed with different models - but with distinct objectives. Typical personalization aims to adapt models to the client's data distribution to obtain a well-performing model on its data of interest. On the other hand, our incentive mechanism attempts to use performance discrepancy to encourage clients to contribute more data. Recent personalization work~\cite{huang2020personalized, zhang2020personalized} introduces an attention-inducing function to facilitate collaborations between similar clients. In contrast, our approach uses weighted aggregation to ensure that models with similar performance can access similar data,
thus achieving our objective.

\section{Background and Problem Formulation}

\subsection{Background}
Federated learning is a clever way of utilizing disjointed data and computational resources to train machine learning models.
We refer to the set of all clients as $\mathcal{C}$. $\mathcal{D}_k$ denotes the local dataset of client $\mathcal{C}_k \in \mathcal{C}$, which consists of $\lvert \mathcal{D}_k \rvert$ samples. $\mathcal{D} = \bigcup_{k\in 1\cdots \lvert \mathcal{C} \rvert} \mathcal{D}_k$ represents the full training set. 
$\mathcal{L}(w, \mathcal{D}_k) = \frac{1}{\lvert \mathcal{D}_k \rvert} \sum_{z\in \mathcal{D}_k} \mathcal{L}(w, z)$ denotes the empirical loss over model $w$ and $\mathcal{D}_k$. In the $t$-th round of federated learning, the optimization problem is formulated as minimizing the empirical loss over all the training examples of the clients $\mathcal{C}_t$ who participate in that round:

\begin{equation}
    \min\limits_{w\in \R_p} \mathcal{L}(w, \mathcal{D}) = \sum\limits_{k\in \mathcal{C}_t} \frac{\lvert \mathcal{D}_k \rvert}{N(\mathcal{C}_t)} 
    \mathcal{L}(w, \mathcal{D}_k)
\end{equation}

where $N(\mathcal{C}_t)$ represents the total number of samples belonging to the clients in $\mathcal{C}_t$. In other words, the global loss function $\mathcal{L}(w, \mathcal{D})$ is the weighted average of local functions $\mathcal{L}(w, \mathcal{D}_k)$ and the weights are proportional to the sizes of the local datasets. We adopt a widely used standard FL algorithm, federated averaging (FedAvg)~\cite{mcmahan2017communication}, to solve the optimization problem. FedAvg is the basic building block of our algorithm and it executes as the following. In the initial stage, the centralized server randomly initializes a global model $w_0$. Then the training process orchestrates alternated local and global updates and communications between the server and the clients in every round $t \in 1 \cdots T$. Specifically, each client performs local training for a few epochs during each round. Afterward, the local models are uploaded online so that the server can perform federated aggregation to generate a single global model. Subsequently, the server distributes the global model to the clients and proceeds to the next round. The training, uploading and distributing processes are repeated until the global model converges. 

Formally speaking, during the $i$-th training epoch of the $t$-th round, each client updates its local model through gradient descent:

\begin{equation}
    w^k_{t,i}= w^k_{t,i-1} -  \eta \nabla \mathcal{L}(w^k_{t,i-1}, \mathcal{D}_k ),
\end{equation}

and at the end of the $t$-th round, the server performs federated aggregation to compute the weighted or unweighted average of selected client models:

\begin{equation}
    w_t \leftarrow \sum\limits_{k\in \mathcal{C}_t} \frac{\lvert \mathcal{D}_k \rvert}{N(\mathcal{C}_t)} w^k_{t,m}.
\end{equation}

The above weighted aggregation assumes that the information of the local data is known, which might be unrealistic in practice. On the contrary, unweighted aggregation is performed when such information is unknown:

\begin{equation}
    w_t \leftarrow \sum\limits_{k\in \mathcal{C}_t} w^k_{t,m}.
\end{equation}

Please refer to Table~\ref{tab:symbol} for a list of notations used in this paper.

The effectiveness of vanilla federated learning relies on clients' collective efforts and their willingness to contribute local data. However, we believe the unselfish assumption is unrealistic when considering the moral hazard. In reality, every client could contribute very little data or even upload a fraudulent model when participating in federated learning. Once it receives the global model, it fine-tunes that model with all the local data. In this way, the client could get almost all the benefits while devoting the least amount of data. Federated learning will be downgraded to individual learning if most clients take this strategy.

\subsection{Problem Formulation}

It is paramount to develop an incentive mechanism to avoid such a scenario.
To address the incentive problem, we first derive the utility function with respect to a client's local data contribution. In the vanilla FL framework, it is assumed that all clients will always contribute all local data. Therefore, a client's utility is only related to the performance of the final model distributed by the centralized server, i.e., $u_i = \gamma p(d_i)$. As already discussed, this is usually not the case.
In contrast, our framework assumes that every client has a goal to maximize its utility by choosing the proper amount of data they would collect or devote when participating in federated learning.
Hence, along with the model performance, we add another term to the client's utility function. This term evaluates the cost of a client's participation, including the efforts of data collection and the consumption of computational resources on local training.
\begin{equation}
    u_i = \gamma p(d_i, D_{\overline{i}}) - \alpha c(d_i).
\end{equation}

Here, $d_i$ represents the local data possessed and contributed by $\mathcal{C}_i$ and $D_{\overline{i}}$ represents data possessed by all the remaining clients and is available to $\mathcal{C}_i$. Accordingly, $p_i = p(d_i, D_{\overline{i}})$ denotes the performance of the model sent to client $i$ while $c_i = c(d_i)$ denotes the client's participation cost. 
We can then formulate the data contribution decision problem as a utility maximization problem:

\begin{problem}[Utility Maximization for Federated Learning]
Design an incentive mechanism and corresponding utility functions that encourages all clients to contribute as much data they can possibly obtain:

\begin{equation}
   \forall i:  \argmax_{d_i} u_i(d_i) = d^t_i.
\end{equation}
\end{problem}

In our setting, rewards are not monetized but reflected by the quality of the models the server sends back to clients. We do not require clients to report their training data sizes honestly. Otherwise, the problem becomes trivial as the server can easily compute the contribution levels, based on which the centralized server rewards the significant contributors.

To facilitate our design, we make the following reasonable \textit{Assumptions}:

\begin{enumerate}[I.]
\item \label{assump:1} We assume a maximum threshold, $d^t_i$, of the amount of data the client $i$ can obtain.
\item \label{assump:2} We assume that model quality is completely determined by data quality for a specific FL training task. For simplicity, we further assume that data quality is solely decided by the amount of data used in training. We let a fixed degeneration factor, $\gamma$, on the model quality for federated aggregation, compared with the case that a single learner can utilize all data in $\mathcal{D}$. 
\item \label{assump:3} We assume the centralized server has a validation dataset so that it can approximately evaluate a local model's performance on $\mathcal{D}$.
\item \label{assump:4} $p(d_i)$ is concave because the marginal performance gain decreases with respect to the amount of data used for training when the dataset is sufficiently large. On the other hand, $c(d_i)$ is convex because the marginal cost to obtain new data increases when the dataset is already sufficiently large.
\end{enumerate}

\section{The Incentivized Federated Learning Algorithm}

Our main idea is to reward a client who makes a more significant contribution with a higher performance model, aggregated from a broader range of other clients' models. More specifically, the centralized server ranks models uploaded by all clients based on their performances. Then a client's model will be aggregated with those models with lower rankings and then sent back to the client. Formally, the aggregation process can be stated as: 
\begin{equation}
    w^k_t \leftarrow \sum\limits^{r(k)}_{j = 1} w^j_{t,m},
\end{equation}

where $r(k)$ represents the position of client $k$ after ranking. 
Note that there is no global model since every client will be distributed a different model based on the performance of its own model. 

Our proposed incentivized federated learning framework is illustrated by Algorithm~\ref{Alg:Iter}. Essentially, we ensure that $p(d_i, D_{\overline{i}})$ is depending on both $d_i$ and $D_{\overline{i}}$, while $d_i$ can also impact $D_{\overline{i}}$.
To realize this effect, we assume the centralized server has access to a small validation dataset that follows the global data distribution (Assumption~\ref{assump:3}). The centralized server can evaluate model performance on this dataset and determine the aggregation strategy for each client. Our purpose is to make sure that a high-performance model is likely to be aggregated with models trained from a larger total amount of data. In other words, we ensure that $D_{\overline{i}}$ increases as $d_i$ increases. According to Assumption~\ref{assump:2}, a model's performance is correlated to the size of its training dataset. We model the relation between $D_{\overline{i}}$ and $d_i$ as a function $D_{others}(d_i)$. As a result, a client who contributes a better model will receive a better resultant model from the server with high probability.

\begin{algorithm}
    \caption{\textit{The Incentivized Federated Learning Algorithm.}$\ \ \ $  $S$ is the set of all clients; $T$ is the number of rounds; $E$ is the number of local epochs; $w_0$ is the common initial model; $\mathcal{D}_k$ is the local dataset of client $k$; $\mathcal{D}_v$ is the global validation dataset;  $\eta$ is the learning rate. }\label{Alg:Iter}
    \hspace*{\algorithmicindent}
    
\begin{algorithmic}[1]\\

// \textit{server executes} \\
 \textbf{procedure } SeverUpdate\\
    \hskip1em \textbf{for} client $k \in S$ \textbf{do}\\
    \hskip2em   $ w^k_0 \leftarrow w_0$ \\
    \hskip1em \textbf{end for}\\
    \hskip1em \textbf{for} round $t \in 1 \cdots T$  \textbf{do} \\
    \hskip2em $Acc \leftarrow \{\}$\\
    \hskip2em \textbf{for} each client $k \in S$ \textbf{in parallel do}\\
    \hskip3em $w^k_{t,m}$ $\leftarrow$ ClientUpdate($k$, $w^k_{t-1}$) \\
    \hskip3em evaluate $w^k_{t,m}$ on $\mathcal{D}_v$ and obtain $Acc^k$ \\
    \hskip3em $Acc \leftarrow Acc \cup \{Acc^k\}$ \\
    \hskip2em \textbf{end for} \\
    \hskip2em $Acc$ $\leftarrow$ $sorted(Acc)$\\
    \hskip2em $r(k)$ $\leftarrow$ the index function of clients according to $Acc$\\
    \hskip2em \textbf{for} client $k \in S$ \textbf{do}\\
    \hskip3em   $ w^k_t \leftarrow \sum\limits^{r(k)}_{j = 1} w^j_{t,m}$ \\
    \hskip2em \textbf{end for} \\
    \hskip1em \textbf{end for} \\
    \\
    // \textit{client $k$ executes} \\
    \textbf{procedure } ClientUpdate($k$, $w^k$) \\
    
    \hskip1em \textbf{for} local epoch $e \in 1 \cdots \textit{E}$ \textbf{do}\\
    \hskip2em \textbf{for} batch $b \in \mathcal{D}_k $ \textbf{do}\\
    \hskip3em   $w^k= w^k -  \eta \nabla \mathcal{L}(w^k, b)$\\
    \hskip2em \textbf{end for} \\
    \hskip1em \textbf{end for} \\
    \hskip1em return $w^k$ to server
    \end{algorithmic}
\end{algorithm}

\setcounter{theorem}{0}

Hence, for vanilla FL, the derivative of the utility function with respect to the amount of training data is:
\begin{equation}
\frac{\partial u_i}{\partial d_i} = \left( \gamma \frac{\partial p_i}{\partial (d_i+D_{\overline{i}})}  - \alpha \frac{\partial  c_i}{\partial d_i} \right).
\label{eq:vanilla_derivative}
\end{equation}

It can be seen that $d_i$ has no impact on $D_{\overline{i}}$. However, for our proposed mechanism, $D_{\overline{i}}$ is depending on $d_i$:

\begin{equation}
\frac{\partial u_i}{\partial d_i} = \left( \gamma \frac{\partial p_i}{\partial (d_i+D_{\overline{i}})} + \gamma \frac{\partial p_i}{\partial (d_i + D_{\overline{i}})} \frac{\partial  D_{\overline{i}}}{\partial d_i} - \alpha \frac{\partial  c_i}{\partial d_i} \right) 
\label{eq:incentive_derivative}
\end{equation}

With the extra term, as long as $\frac{\partial D_{\overline{i}}}{\partial d_i}$ is positive, the optimal value $d^{opt}_i$ for Equation~\ref{eq:vanilla_derivative} will be greater than that of Equation~\ref{eq:incentive_derivative}.
Given $D_{\overline{i}}$ a concave function over $d_i$, a sufficiently large $\frac{\partial D_{\overline{i}}}{\partial d_i}$ guarantees that $d^{opt}_i$ equals or exceeds the maximum threshold $d^t_i$.
In this case, every client will attempt to contribute as much data as it can collect to maximize its utility.


Lemma~\ref{lemma1} states that those clients who contribute more data can have their models learn from more data from other clients.

\begin{lemma}\label{lemma1}
If the server has access to global data distribution and can precisely measure models' performances, our mechanism ensures $\frac{\partial D_{\overline{i}}}{\partial d_i} \geq 0$ for all $i$.
\end{lemma}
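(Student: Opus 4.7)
The plan is to prove this by establishing a chain of monotonicity: more contributed data $\Rightarrow$ better local model $\Rightarrow$ higher rank $\Rightarrow$ larger set of aggregated clients $\Rightarrow$ larger $D_{\overline{i}}$. All four links are essentially immediate from the stated assumptions together with the construction in Algorithm~\ref{Alg:Iter}, and I would present them as a short monotonicity argument rather than via an explicit formula.

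First I would fix the contributions $\{d_j\}_{j\neq i}$ of all other clients and regard client $i$'s local model performance $p^{\text{loc}}_i$ as a function of $d_i$ alone. By Assumption~\ref{assump:2}, model quality is a function of the amount of training data only, and the standard concavity/monotonicity hypothesis in Assumption~\ref{assump:4} makes this function non-decreasing in $d_i$. Next, because the server has access to the global data distribution and can evaluate model performance exactly (the hypothesis of the lemma), the measured accuracy $\mathrm{Acc}^i$ equals the true performance $p^{\text{loc}}_i$. Hence the rank $r(i)$, defined as the position of $i$ in the sorted order of $\mathrm{Acc}$, is a non-decreasing step function of $d_i$ while the other clients' contributions and models are held fixed.

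Then I would translate this into a statement about $D_{\overline{i}}$. By construction of the aggregation in Algorithm~\ref{Alg:Iter}, client $i$ receives the sum of the models of the $r(i)$ clients with the lowest accuracies, so the set of other clients whose data is represented in $i$'s aggregated model is exactly $\{j \neq i : r(j) \le r(i)\}$, and
\begin{equation}
D_{\overline{i}} \;=\; \sum_{j \neq i,\ r(j) \le r(i)} d_j.
\end{equation}
Since $r(i)$ is weakly increasing in $d_i$ and the other $r(j)$'s do not depend on $d_i$, enlarging $d_i$ can only add (never remove) clients from this summation. Therefore $D_{\overline{i}}$ is weakly increasing in $d_i$, which gives $\partial D_{\overline{i}} / \partial d_i \geq 0$ wherever the derivative is defined.

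The main obstacle, which I would address explicitly, is that $r(i)$ and hence $D_{\overline{i}}$ are integer-valued step functions of $d_i$, so the partial derivative in the classical sense is zero almost everywhere and undefined at jump points. I would resolve this either by interpreting $\partial D_{\overline{i}}/\partial d_i$ as a one-sided (right) derivative or as a subgradient, both of which are non-negative by the monotonicity argument above; alternatively, one can smooth the ranking (e.g., weighting by a softmax of accuracies as a conceptual continuous relaxation) and observe that the conclusion survives in the limit. A secondary minor point worth flagging is that the argument treats all other clients' models as frozen when varying $d_i$; since the incentive claim concerns each client's best-response decision, this ceteris paribus reading is the intended one and aligns with how Equation~\ref{eq:incentive_derivative} is used in the surrounding discussion.
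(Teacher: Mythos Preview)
Your proposal is correct and follows essentially the same chain-of-monotonicity argument as the paper, which factors the derivative through performance via the chain rule $\frac{\partial D_{\overline{i}}}{\partial d_i} = \frac{\partial D_{\overline{i}}}{\partial p_i}\cdot \frac{\partial p_i}{\partial d_i}$ and argues each factor is non-negative from Assumption~\ref{assump:2} and the ranking-based aggregation. Your version is in fact more careful than the paper's, since you make the rank function explicit and address the step-function/differentiability issue that the paper glosses over.
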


\begin{proof}
    
    By Assumption~\ref{assump:2}, a model's performance is monotonically increasing as the amount of data used for training grows, so $\frac{\partial p_i}{\partial d_i} > 0$. Moreover, our aggregation strategy ensures that a higher performance model will always be aggregated with a larger set of models from other clients. Again, by Assumption~\ref{assump:2}, those models represent a larger amount of training data, which implies $\frac{\partial D_{\overline{i}}}{\partial p_i} \geq 0$. Hence, $\frac{\partial D_{\overline{i}}}{\partial d_i} = \frac{\partial D_{\overline{i}}}{\partial p_i}\cdot \frac{\partial p_i}{\partial d_i}  > 0$.
    
\end{proof}

With the Lemma~\ref{lemma1}, we can derive Theorem~\ref{thm1}. Intuitively, it states that our incentive mechanism satisfies the baseline incentive requirement that it encourages clients to contribute more data compared with the vanilla federated learning.

\begin{theorem} \label{thm1}
Let $d^{opt}_i$ and $d^{opt*}_i$ denote the optimal values of $d_i$ that maximizes utility under the proposed incentive mechanism and the vanilla federated learning, respectively. The proposed mechanism always guarantees that $d^{opt}_i \geq d^{opt*}_i$.
\end{theorem}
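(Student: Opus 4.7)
The plan is to prove Theorem~\ref{thm1} by a pointwise comparison of the two utility derivatives already written down in Equations~\eqref{eq:vanilla_derivative} and~\eqref{eq:incentive_derivative}, combined with a concavity-based monotonicity argument to pass from derivatives to argmaxes. First I would fix an arbitrary client $i$ and treat its utility $u_i$ as a function of the single scalar $d_i$, restricted to the feasible interval $[0, d^t_i]$ given by Assumption~\ref{assump:1}. Under the vanilla FL model the quantity $D_{\overline{i}}$ is an exogenous constant, whereas under our mechanism $D_{\overline{i}} = D_{others}(d_i)$ is an endogenous function of $d_i$; everything else in the two expressions is identical.

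Second, I would subtract the vanilla derivative from the incentive derivative. The only surviving term is $\gamma\,\frac{\partial p_i}{\partial(d_i+D_{\overline{i}})}\,\frac{\partial D_{\overline{i}}}{\partial d_i}$. The factor $\gamma$ is a fixed positive degeneration constant from Assumption~\ref{assump:2}; the factor $\frac{\partial p_i}{\partial(d_i+D_{\overline{i}})}$ is positive by monotonicity of $p$ in the total training data (Assumption~\ref{assump:2}); and the factor $\frac{\partial D_{\overline{i}}}{\partial d_i}$ is non-negative by Lemma~\ref{lemma1}. Hence the incentive derivative dominates the vanilla derivative pointwise on $[0,d^t_i]$.

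Third, I would lift this pointwise inequality to the optimizers using concavity. Assumption~\ref{assump:4} gives $p$ concave and $c$ convex, and $D_{others}$ is taken to be concave in $d_i$ (as noted in the paragraph preceding the lemma), so $d_i+D_{others}(d_i)$ is concave and increasing; composing with the increasing concave $p$ preserves concavity, and subtracting the convex $\alpha c(d_i)$ does as well. Thus both $u_i^{\text{vanilla}}$ and $u_i^{\text{incentive}}$ are concave in $d_i$, meaning their derivatives are non-increasing. If $d^{opt*}_i$ is interior, it satisfies $(u_i^{\text{vanilla}})'(d^{opt*}_i)=0$, so by the pointwise inequality $(u_i^{\text{incentive}})'(d^{opt*}_i)\geq 0$, and non-increasingness of $(u_i^{\text{incentive}})'$ forces the incentive argmax to lie at or to the right of $d^{opt*}_i$. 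If $d^{opt*}_i$ is a boundary maximizer at $d^t_i$, the same inequality shows $(u_i^{\text{incentive}})'$ is still non-negative there, so $d^{opt}_i = d^t_i \geq d^{opt*}_i$ trivially; the case $d^{opt*}_i=0$ is handled symmetrically.

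The main obstacle I expect is not any single calculation but rather the soundness of the concavity step: the theorem really needs both utilities to be concave on the same domain, and this in turn depends on $D_{others}$ being concave and on $p$ being monotone increasing, neither of which is a hypothesis of the theorem statement itself. I would therefore either (i) inherit these as standing assumptions by referencing Assumption~\ref{assump:4} and the surrounding discussion, or (ii) weaken the conclusion slightly to "$d^{opt}_i\geq d^{opt*}_i$ whenever both maxima are unique," so that the pointwise derivative dominance argument applies verbatim. The rest of the proof is a short assembly of Lemma~\ref{lemma1}, Assumption~\ref{assump:2}, and one-dimensional calculus.
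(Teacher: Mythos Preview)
Your proposal is essentially the paper's argument: both subtract the vanilla derivative from the incentive derivative, identify the surviving term $\gamma\,\frac{\partial p_i}{\partial(d_i+D_{\overline{i}})}\,\frac{\partial D_{\overline{i}}}{\partial d_i}\geq 0$ via Lemma~\ref{lemma1} and Assumption~\ref{assump:2}, and then use Assumption~\ref{assump:4} to pass from derivatives to argmaxes. The boundary-versus-interior breakdown you sketch is exactly the case split the paper carries out.

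There is one genuine difference worth noting. You import concavity of $D_{others}(d_i)$ so that the \emph{incentive} utility is concave and hence has a monotone derivative. The paper does not do this for Theorem~\ref{thm1}; that hypothesis is reserved for Theorem~\ref{thm2}. Instead, writing $f(d_i)$ for the vanilla derivative and $g(d_i)>0$ for the extra term, the paper uses only that $f$ is monotonically decreasing (Assumption~\ref{assump:4}). At any interior critical point of the incentive utility one has $f+g=0$, hence $f<0$, hence that point lies strictly to the right of the vanilla zero $d^{opt*}_i$; and the left endpoint $d_i=0$ cannot be the incentive maximizer because $(f+g)(0)>0$ there. So every candidate maximizer of the incentive utility already lies at or beyond $d^{opt*}_i$, with no need for $f+g$ itself to be monotone. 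This is precisely the obstacle you flagged, and the paper's resolution is cleaner than either of your two proposed fixes: it simply sidesteps the concavity of the incentive utility altogether rather than assuming it or weakening the conclusion.
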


\begin{proof}

Consider the utility functions given by Equation~\ref{eq:incentive_derivative} and Equation~\ref{eq:vanilla_derivative}.
For clarity, let $f(d_i) = \gamma \frac{\partial p_i}{\partial (d_i+D_{\overline{i}})}- \alpha \frac{\partial  c_i}{\partial d_i}$ and $g(d_i) = \gamma \frac{\partial p_i}{\partial (d_i + D_{\overline{i}})} \frac{\partial  D_{\overline{i}}}{\partial d_i}$.
Thus, the utility function for the proposed incentive mechanism is $f(d_i) + g(d_i)$ and that for vanilla federated learning is $f(d_i)$.
Note that based on Assumption~\ref{assump:4}, both $\gamma \frac{\partial p_i}{\partial (d_i+D_{\overline{i}})}$ and $- \alpha \frac{\partial  c_i}{\partial d_i}$ monotonically decreases for $d_i \in [0, \mathcal{D}_i]$, and so is $f(d_i)$. Additionally, $g(d_i) > 0$ always holds, because our incentive mechanism guarantees that $\frac{\partial  D_{\overline{i}}}{\partial d_i} \geq 0$, while Assumption~\ref{assump:2} implies that $\frac{\partial p_i}{\partial (d_i + D_{\overline{i}})} > 0$. We prove by 2 cases:

\textbf{Case 1} ($f(d_i)\at[\big]{d_i=\mathcal{D}_i} > 0$).

This condition implies that $f(d_i) > 0$ throughout the range $d_i \in [0, \mathcal{D}_i]$. Hence, a client should devote the whole dataset to maximize its utility in both schemes, \textit{i.e.,} $d^{opt}_i = d^{opt*}_i = \mathcal{D}_i$. So $d^{opt}_i \geq d^{opt*}_i$ holds in this case.


\textbf{Case 2} ($f(d_i)\at[\big]{d_i=\mathcal{D}_i} \leq 0$).


If $f(d_i)\at[\big]{d_i=0} \leq 0$, $f(d_i)$ is negative throughout the range. Therefore the optimal value for vanilla federated learning $d^{opt*}_i = 0$. It immediately follows that $d^{opt}_i \geq d^{opt*}_i$.

Otherwise, $f(d_i)\at[\big]{d_i=0} > 0$.  We further decompose the scenario into 2 cases. When $(f(d_i) + g(d_i))\at[\big]{d_i=\mathcal{D}_i} \geq 0$, $d^{opt}_i = \mathcal{D}_i$, and therefore $d^{opt}_i \geq d^{opt*}_i$. When $(f(d_i) + g(d_i))\at[\big]{d_i=\mathcal{D}_i} < 0$, there must exists a $d^{opt*}_i$, such that $f(d^{opt*}_i) = 0$. There must also exist a $d^{opt}_i$, such that $f(d^{opt}_i) + g(d^{opt}_i) = 0$. Because $g(d_i) > 0$ always holds, $f(d^{opt}_i) < 0$. Additionally, $f(d_i)$ is monotonically decreasing, and this indicates that $d^{opt}_i \geq d^{opt*}_i$.

Hence, we conclude that $d'_{opt} \geq d_{opt}$.

\end{proof}


The guarantee provided by Theorem~\ref{thm1} may not be sufficiently strong for some use cases. We would like to ensure that all clients contribute data to their maximal capabilities. We present a stronger result in Theorem~\ref{thm2}. It states that when the magnitude of the term $\frac{\partial  D_{\overline{i}}}{\partial d_i}$ is large enough to dominate the utility function, the client will devote all its data. We would like to point out that the term is also depending on the data distributions among all clients along with our incentive mechanism.

\begin{theorem}\label{thm2}
If $D_{\overline{i}}$ is a concave function over $d_i$, and $\frac{\partial  D_{\overline{i}}}{\partial d_i}$ is sufficiently large, for example, it satisfies Equation~\ref{eq:large}, all clients will contribute the maximum amount of data they possess to maximize their utilities.
\end{theorem}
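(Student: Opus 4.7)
The plan is to show that under the stated conditions, the derivative $\partial u_i/\partial d_i$ given by Equation~\ref{eq:incentive_derivative} is non-negative on the entire interval $[0, d^t_i]$, so that $u_i$ is non-decreasing in $d_i$ and is therefore maximized at the boundary $d_i = d^t_i$. First I would rewrite the incentive-mechanism derivative in the factored form
\begin{equation*}
\frac{\partial u_i}{\partial d_i} = \gamma\,\frac{\partial p_i}{\partial(d_i+D_{\overline{i}})}\left(1 + \frac{\partial D_{\overline{i}}}{\partial d_i}\right) - \alpha\,\frac{\partial c_i}{\partial d_i},
\end{equation*}
which makes it transparent that the incentive term $\partial D_{\overline{i}}/\partial d_i$ acts as a multiplier on the marginal performance gain.

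Next I would argue that this expression is monotonically non-increasing in $d_i$ on $[0, d^t_i]$. Three ingredients combine: (a) by Lemma~\ref{lemma1}, $d_i + D_{\overline{i}}$ is non-decreasing in $d_i$, so by concavity of $p$ (Assumption~\ref{assump:4}) the factor $\partial p_i/\partial(d_i+D_{\overline{i}})$ is non-negative and non-increasing; (b) by the concavity of $D_{\overline{i}}$ in $d_i$ hypothesized in the theorem, the factor $1+\partial D_{\overline{i}}/\partial d_i$ is positive and non-increasing; and (c) by convexity of $c$ (Assumption~\ref{assump:4}), $\alpha\,\partial c_i/\partial d_i$ is non-decreasing. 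A product of two non-negative non-increasing functions is non-increasing, so the positive part of the derivative is non-increasing while the subtracted part is non-decreasing, hence the whole derivative is non-increasing in $d_i$.

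The third step is to write down exactly what ``sufficiently large'' means. I would take Equation~\ref{eq:large} to be the boundary condition
\begin{equation*}
\gamma\,\frac{\partial p_i}{\partial(d_i+D_{\overline{i}})}\left(1 + \frac{\partial D_{\overline{i}}}{\partial d_i}\right)\bigg|_{d_i=d^t_i} \;\geq\; \alpha\,\frac{\partial c_i}{\partial d_i}\bigg|_{d_i=d^t_i},
\end{equation*}
which is equivalent to $\partial u_i/\partial d_i \ge 0$ evaluated at $d_i = d^t_i$. Combined with the monotonic non-increase established in the previous step, this forces $\partial u_i/\partial d_i \ge 0$ for every $d_i \in [0, d^t_i]$: if the smallest value of a non-increasing function on an interval is already non-negative, the entire function is non-negative. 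Integrating, $u_i$ is non-decreasing on $[0, d^t_i]$, so $d^{opt}_i = d^t_i$, which is the claim.

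The subtle step I expect to be the main obstacle is step two, namely rigorously justifying monotonicity of the product $\partial p_i/\partial(d_i+D_{\overline{i}}) \cdot (1 + \partial D_{\overline{i}}/\partial d_i)$ along the trajectory $d_i \mapsto (d_i, D_{\overline{i}}(d_i))$: the concavity of $p$ was stated with respect to the total training data, not with respect to $d_i$ alone, so I need the chain rule observation that $d_i + D_{\overline{i}}(d_i)$ is itself monotonically increasing (which follows from Lemma~\ref{lemma1}) before I can invoke concavity of $p$ to get the desired monotonic decrease of the marginal performance factor. Once this chaining is in place, the rest of the argument is a direct comparison of signs.
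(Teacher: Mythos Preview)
Your proposal is correct and follows essentially the same strategy as the paper: show that $\partial u_i/\partial d_i$ is non-increasing on $[0,d^t_i]$, then use Equation~\ref{eq:large} to force the derivative to be non-negative at the right endpoint $d_i=\mathcal{D}_i$, hence everywhere. The only cosmetic difference is that the paper establishes the non-increase of $\partial u_i/\partial d_i$ by explicitly computing the second derivative (Equation~\ref{eq:incentive_two}) and checking each term's sign, whereas you factor the first derivative and argue monotonicity of each factor directly; the two computations are equivalent, and your chain-rule remark about $d_i+D_{\overline{i}}(d_i)$ being increasing is exactly what makes the paper's term $\frac{\partial^2 p_i}{\partial^2(d_i+D_{\overline{i}})}(\frac{\partial D_{\overline{i}}}{\partial d_i}+1)^2$ come out with the right sign.
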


\begin{equation}
\label{eq:large}
\frac{\partial  D_{\overline{i}}}{\partial d_i} \at[\big]{d_i=\mathcal{D}_i} >  \frac{ \frac{\alpha}{\gamma} \cdot \frac{\partial  c_i}{\partial d_i} \at[\big]{d_i=\mathcal{D}_i} -\frac{\partial p_i}{\partial (d_i+D_{\overline{i}})}\at[\big]{d_i=\mathcal{D}_i}} {\frac{\partial p_i}{\partial (d_i+D_{\overline{i}})}\at[\big]{d_i=\mathcal{D}_i}}.
\end{equation}

\begin{proof}
Let $u(d_i)$ and $u^*(d_i)$ denote the utility functions under the proposed incentive mechanism and the vanilla federated learning, respectively.

Due to Assumption~\ref{assump:4}, we have
 
\begin{equation}
\frac{\partial^2 u_i^*}{\partial^2 d_i} = \left( \frac{\partial^2 p_i}{\partial^2 (d_i+D_{\overline{i}})}  - \alpha \frac{\partial^2 c_i}{\partial^2 d_i} \right) \leq 0.
\end{equation}

Thus $u^*(d_i)$ is a concave function within the range. Moreover, given Assumption~\ref{assump:2}, if $D_{\overline{i}}$ is a concave function over $d_i$, we have 

\begin{equation}
\frac{\partial^2 u_i}{\partial^2 d_i} = \left( \gamma \frac{\partial^2 p_i}{\partial^2 (d_i+D_{\overline{i}})}(\frac{\partial  D_{\overline{i}}}{\partial d_i} + 1)^2  + \gamma \frac{\partial p_i}{\partial (d_i+D_{\overline{i}})} \frac{\partial^2  D_{\overline{i}}}{\partial^2 d_i} - \alpha \frac{\partial^2 c_i}{\partial^2 d_i} \right) \leq 0.
\label{eq:incentive_two}
\end{equation}

When Equation~\ref{eq:large}, we can immediately derive that $\frac{\partial u_i}{\partial d_i} \at[\big]{d_i=\mathcal{D}_i} > 0$. It follows that both $u(d_i)$ and $u^*(d_i)$ monotonically increase as $d_i$ increases. Thus, all clients are willing to contribute as much data as possible. 
 
\end{proof}

The proof of Theorem~\ref{thm2} requires $D_{\overline{i}}$ to be a concave function over $d_i$. Note that this is a sufficient condition, but may not be a necessary condition in practice. Additionally, we observe that this is usually the case in reality, particularly when the amounts of data possessed by clients is sub-exponential, \textit{i.e.,} a small number of clients hold a larger amount of data when the majority of the clients hold fewer data. Even without the concavity assumption, the following statement is still valid.

\newtheorem{corollary}{Corollary}[theorem]

\begin{corollary}
It is a Nash Equilibrium state that all clients contribute the maximum amount of data they possess in federated learning.
\end{corollary}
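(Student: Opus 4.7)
The plan is to verify the Nash equilibrium condition directly. By symmetry it suffices to pick an arbitrary client $i$, fix every other client at its maximum contribution $d_j = \mathcal{D}_j$ for $j \neq i$, and show that client $i$'s best response is $d_i = \mathcal{D}_i$.

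First I would analyze how $D_{\overline{i}}(d_i)$ behaves under this fixed profile. Since every client $j \neq i$ trains on its full dataset, they all produce their highest-quality local models by Assumption~\ref{assump:2}. When client $i$ also contributes its full dataset, its own model is competitive with the others, client $i$ attains a top-tier rank, and the aggregation rule $w^i_t \leftarrow \sum_{j=1}^{r(i)} w^j_{t,m}$ places the largest possible amount of other clients' training data into $D_{\overline{i}}$. Any strict deviation $d_i < \mathcal{D}_i$ strictly lowers client $i$'s local model performance, pushing its rank below the other max-contributors and shrinking $D_{\overline{i}}$ accordingly. This already yields the key structural fact $D_{\overline{i}}(\mathcal{D}_i) \geq D_{\overline{i}}(d_i)$ for every $d_i \in [0, \mathcal{D}_i]$, obtained from the monotonicity of the ranking map alone, without invoking concavity.

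Combining this monotonicity with Assumptions~\ref{assump:2} and \ref{assump:4} and the sufficient magnitude condition from Equation~\ref{eq:large}, the benefit portion of the utility grows in $d_i$ through two channels---directly through $d_i$ and indirectly through $D_{\overline{i}}$ via the ranking mechanism---and this combined gain dominates the marginal cost $\alpha\, c(d_i)$. Consequently $u_i(\mathcal{D}_i) \geq u_i(d_i)$ for every $d_i \in [0, \mathcal{D}_i]$, which is exactly the Nash best-response condition, and the arbitrariness of $i$ then yields Nash equilibrium.

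The main obstacle is the loss of the concavity hypothesis used in Theorem~\ref{thm2}, which had guaranteed the utility function to have no interior maximum; without concavity one cannot in general deduce global optimality from a positive boundary derivative. The hard part of the plan is therefore to argue the global inequality $u_i(\mathcal{D}_i) \geq u_i(d_i)$ rather than merely a local one. My route is to exploit the fact that the ranking-based map $d_i \mapsto D_{\overline{i}}$ is monotone even if not concave, so that the loss on any downward deviation can be bounded below by the aggregated data of the top-ranked clients that drop out of client $i$'s aggregation pool as its rank decreases, which sidesteps the need for any interior curvature information on $D_{\overline{i}}(d_i)$.
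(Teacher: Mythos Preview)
The paper does not actually give a proof of this corollary. Its entire justification is the single sentence ``Under the proposed incentive mechanism, no clients will change their strategy when they have already devoted all their data in federated learning,'' which is a restatement of the claim rather than an argument. So your proposal is necessarily taking a different---and far more substantive---route: you set up the standard best-response verification (fix $d_j=\mathcal{D}_j$ for $j\neq i$, show $d_i=\mathcal{D}_i$ is optimal), and you correctly isolate monotonicity of the ranking map $d_i\mapsto D_{\overline{i}}$ as the structural ingredient that survives when concavity is dropped. That framing is sound and is more than the paper provides.

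That said, the gap you yourself flag in the last paragraph is real and your proposed workaround does not close it. Equation~\ref{eq:large} is a condition on the derivative of $u_i$ at the single boundary point $d_i=\mathcal{D}_i$, and monotonicity of $D_{\overline{i}}$ gives only the \emph{sign} of the change in $D_{\overline{i}}$ under a downward deviation, not its magnitude. To obtain the global inequality $u_i(\mathcal{D}_i)\geq u_i(d_i)$ for every $d_i\in[0,\mathcal{D}_i]$ you must compare the performance loss $\gamma\bigl[p(\mathcal{D}_i+D_{\overline{i}}(\mathcal{D}_i))-p(d_i+D_{\overline{i}}(d_i))\bigr]$ against the cost savings $\alpha\bigl[c(\mathcal{D}_i)-c(d_i)\bigr]$ for arbitrary $d_i$, and neither the boundary derivative in Equation~\ref{eq:large} nor the ``aggregated data of the top-ranked clients that drop out'' controls this comparison away from the endpoint. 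The latter bound is just monotonicity rewritten; it does not supply the missing quantitative link between the drop in $D_{\overline{i}}$ and the cost term. Without concavity (or a pointwise strengthening of Equation~\ref{eq:large} to all $d_i$, or some other global hypothesis), the argument as sketched does not go through. In fairness, the paper does not close this gap either---it simply asserts the conclusion---so your write-up is already more careful than the original; but you should be explicit that an additional hypothesis is needed, or else weaken the conclusion to a \emph{local} Nash equilibrium, which is all that Equation~\ref{eq:large} alone delivers.
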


Under the proposed incentive mechanism, no clients will change their strategy when they have already devoted all their data in federated learning.

\textbf{Discussions.}

\textit{i)} The proposed approach trades off the model performance for the incentive to contribute data. Only the client who possesses the largest amount of data can obtain a model that has a similar performance as in vanilla federated learning. Therefore, if model performance is a major concern, the vanilla federated learning could be more desirable.

\textit{ii)} We made a strong assumption in Assumption~\ref{assump:2} that the training data has even quality. It can be interesting to study how data quality can affect the incentive mechanism.

\textit{iii)} The proof Theorem~\ref{thm2} requires that the magnitude of $\frac{\partial  D_{\overline{i}}}{\partial d_i}$ is sufficiently large, and the concavity of $D_{\overline{i}}$. However, both of them rely on the data distribution among the clients, albeit they are reasonable for general federated learning applications. Furthermore, we note that although these two conditions are required for the soundness of the proof, they are not always necessary for our incentive mechanism. We plan to experimentally study when these two conditions can be violated and how the proposed mechanism works in those situations. 

\section{Conclusion}

In this paper, we present an incentive mechanism for federated learning, such that all clients are willing to contribute their data. We replace monetary rewards with rewards in model performance. We proved that all clients are always more willing to contribute data under the proposed mechanism. Moreover, the proposed mechanism ensures that all clients will devote their maximum amount of data under certain conditions. Future works include assessing its effectiveness in practice and designing dedicated mechanisms for more general and diverse scenarios.




\bibliographystyle{unsrt}  


{\small
\bibliography{main}

\begin{thebibliography}{10}

\bibitem{sarikaya2019motivating}
Yunus Sarikaya and Ozgur Ercetin.
\newblock Motivating workers in federated learning: A stackelberg game
  perspective, 2019.

\bibitem{Zhan20Learning}
Yufeng Zhan, Peng Li, Zhihao Qu, Deze Zeng, and Song Guo.
\newblock A learning-based incentive mechanism for federated learning.
\newblock {\em IEEE Internet of Things Journal}, 7(7):6360--6368, 2020.

\bibitem{feng2018joint}
Shaohan Feng, Dusit Niyato, Ping Wang, Dong~In Kim, and Ying-Chang Liang.
\newblock Joint service pricing and cooperative relay communication for
  federated learning, 2018.

\bibitem{Zeng_2020}
Rongfei Zeng, Shixun Zhang, Jiaqi Wang, and Xiaowen Chu.
\newblock Fmore: An incentive scheme of multi-dimensional auction for federated
  learning in mec.
\newblock {\em 2020 IEEE 40th International Conference on Distributed Computing
  Systems (ICDCS)}, Nov 2020.

\bibitem{jiao2020automated}
Yutao Jiao, Ping Wang, Dusit Niyato, Bin Lin, and Dong~In Kim.
\newblock Toward an automated auction framework for wireless federated learning
  services market, 2020.

\bibitem{Le20AucCellular}
Tra Huong~Thi Le, Nguyen~H. Tran, Yan~Kyaw Tun, Zhu Han, and Choong~Seon Hong.
\newblock Auction based incentive design for efficient federated learning in
  cellular wireless networks.
\newblock In {\em 2020 IEEE Wireless Communications and Networking Conference
  (WCNC)}, pages 1--6, 2020.

\bibitem{kang2019incentive}
Jiawen Kang, Zehui Xiong, Dusit Niyato, Han Yu, Ying-Chang Liang, and Dong~In
  Kim.
\newblock Incentive design for efficient federated learning in mobile networks:
  A contract theory approach, 2019.

\bibitem{Ding20MultiDimensional}
Ningning Ding, Zhixuan Fang, and Jianwei Huang.
\newblock Incentive mechanism design for federated learning with
  multi-dimensional private information.
\newblock In {\em 2020 18th International Symposium on Modeling and
  Optimization in Mobile, Ad Hoc, and Wireless Networks (WiOPT)}, pages 1--8,
  2020.

\bibitem{Wnag19MeasureContribution}
Guan Wang, Charlie~Xiaoqian Dang, and Ziye Zhou.
\newblock Measure contribution of participants in federated learning.
\newblock In {\em 2019 IEEE International Conference on Big Data (Big Data)},
  pages 2597--2604, 2019.

\bibitem{Song19ProfitAllocation}
Tianshu Song, Yongxin Tong, and Shuyue Wei.
\newblock Profit allocation for federated learning.
\newblock In {\em 2019 IEEE International Conference on Big Data (Big Data)},
  pages 2577--2586, 2019.

\bibitem{liu2020fedcoin}
Yuan Liu, Shuai Sun, Zhengpeng Ai, Shuangfeng Zhang, Zelei Liu, and Han Yu.
\newblock Fedcoin: A peer-to-peer payment system for federated learning, 2020.

\bibitem{Wang20OptimizingNonIID}
Hao Wang, Zakhary Kaplan, Di~Niu, and Baochun Li.
\newblock Optimizing federated learning on non-iid data with reinforcement
  learning.
\newblock In {\em IEEE INFOCOM 2020 - IEEE Conference on Computer
  Communications}, pages 1698--1707, 2020.

\bibitem{li2020fair}
Tian Li, Maziar Sanjabi, Ahmad Beirami, and Virginia Smith.
\newblock Fair resource allocation in federated learning, 2020.

\bibitem{mohri2019agnostic}
Mehryar Mohri, Gary Sivek, and Ananda~Theertha Suresh.
\newblock Agnostic federated learning, 2019.

\bibitem{huang2020fairness}
Wei Huang, Tianrui Li, Dexian Wang, Shengdong Du, and Junbo Zhang.
\newblock Fairness and accuracy in federated learning, 2020.

\bibitem{NEURIPS2019_60a6c400}
Ligeng Zhu, Zhijian Liu, and Song Han.
\newblock Deep leakage from gradients, 2019.

\bibitem{zhao2020idlg}
Bo~Zhao, Konda~Reddy Mopuri, and Hakan Bilen.
\newblock idlg: Improved deep leakage from gradients, 2020.

\bibitem{huang2020personalized}
Yutao Huang, Lingyang Chu, Zirui Zhou, Lanjun Wang, Jiangchuan Liu, Jian Pei,
  and Yong Zhang.
\newblock Personalized cross-silo federated learning on non-iid data, 2020.

\bibitem{zhang2020personalized}
Michael Zhang, Karan Sapra, Sanja Fidler, Serena Yeung, and Jose~M. Alvarez.
\newblock Personalized federated learning with first order model optimization,
  2020.

\bibitem{mcmahan2017communication}
Brendan McMahan, Eider Moore, Daniel Ramage, Seth Hampson, and Blaise~Aguera
  y~Arcas.
\newblock Communication-efficient learning of deep networks from decentralized
  data.
\newblock In {\em Artificial intelligence and statistics}, pages 1273--1282.
  PMLR, 2017.

\end{thebibliography}
}
\begin{appendices}  

\section{Symbol Table}

\begin{table}[h]
\centering
\caption{Table of Notation}
\label{tab:symbol}
\begin{tabular}{ll}
\hline
\textbf{Notation}              & \textbf{Description}                       \\ \hline
FL & federated learning \\\hline
$\mathcal{C}$ & the set of all clients participate in FL \\
$\mathcal{C}_k$ & $k$-th client in $\mathcal{C}$\\
$\mathcal{D}_k$ & the dataset possessed by $\mathcal{C}_k$\\
$\mathcal{D}$ & the full dataset\\
$\mathcal{L}(w, \mathcal{D})$ & loss of model $w$ on dataset $\mathcal{D}$\\
$t$ & $t$-th round of FL, $t \in 1\cdots T$\\
$\mathcal{C}_t$ & the set of all clients who participate in or selected for round $t$\\
$N(\mathcal{C}_t)$ & the total number of samples belonging to the clients in $\mathcal{C}_t$\\
$w^k_{t,i}$ & local model of $\mathcal{C}_k$ in round $t$ and training epoch $i$\\
$w^k_{t}$ & server aggregated model for $\mathcal{C}_k$ in round $t$\\\hline
$u(\cdot),\ u_i$ & utility function of $\mathcal{C}_i$\\
$d_i$ & the portion of data possessed and contributed by $\mathcal{C}_i$\\
$d^{opt}_i$ & the value of $d_i$ which maximizes a corresponding utility function\\
$D_{\overline{i}}$ & the portion of data possessed by other clients and is available to $\mathcal{C}_i$\\
$D^{total}_{i}$ & data possessed by all clients and is available to $\mathcal{C}_i$\\
$p(\cdot),\ p_i$ & model performance function of $\mathcal{C}_i$\\
$c(\cdot),\ c_i$ & participation cost function of $\mathcal{C}_i$\\
$r(\cdot)$ & model performance index function\\
$\mathcal{D}_v$ & the global validation dataset held by the central server\\
\hline
\end{tabular}
\end{table}

\section{Partial Derivatives of $u(d_i)$}

Derivations for the partial derivatives of $u(d_i)$:

\begin{equation}
    \begin{split}
        \frac{\partial u_i}{\partial d_i} & = \gamma \frac{\partial p_i}{\partial d_i} - \alpha \frac{\partial c_i}{\partial d_i} \\
        & =  \gamma \frac{\partial p_i}{\partial (d_i+D_{\overline{i}})} \frac{\partial (d_i+D_{\overline{i}})}{\partial d_i} - \alpha \frac{\partial c_i}{\partial d_i} \\
        & =  \gamma \frac{\partial p_i}{\partial (d_i+D_{\overline{i}})} (1+\frac{\partial D_{\overline{i}} }{\partial d_i}) - \alpha \frac{\partial c_i}{\partial d_i}\\
        &= \left(\gamma  \frac{\partial p_i}{\partial (d_i+D_{\overline{i}})} + \gamma \frac{\partial p_i}{\partial (d_i + D_{\overline{i}})} \frac{\partial  D_{\overline{i}}}{\partial d_i} - \alpha \frac{\partial c_i}{\partial d_i} \right) \geq 0.
    \end{split}
\end{equation}

Denote $\frac{\partial p_i}{\partial (d_i+D_{\overline{i}})} = F(d_i+D_{\overline{i}})$, then 
\begin{equation} \label{eq2}
\begin{split}
\frac{\partial^2 u}{\partial^2 d_i} & = \gamma \left( \frac{\partial^2 p_i}{\partial (d_i+D_{\overline{i}}) \partial d_i} + \frac{\partial p_i}{\partial (d_i + D_{\overline{i}})} \frac{\partial^2 D_{\overline{i}}}{\partial^2 d_i} + \frac{\partial^2 p_i}{\partial (d_i + D_{\overline{i}}) \partial d_i} \frac{\partial D_{\overline{i}}}{\partial d_i} \right) - \beta \frac{\partial^2 c_i}{\partial^2 d_i} \\
& = \gamma \left( \frac{\partial F}{ \partial d_i} + \frac{\partial p_i}{\partial (d_i+D_{\overline{i}})} \frac{\partial^2 D_{\overline{i}}}{\partial^2 d_i} + \frac{\partial F}{ \partial d_i} \frac{\partial D_{\overline{i}}}{\partial d_i} \right) - \beta \frac{\partial^2 c_i}{\partial^2 d_i} \\
& = \gamma \left( \frac{\partial F}{ \partial (d_i+D_{\overline{i}})}(\frac{\partial D_{\overline{i}}}{\partial d_i} + 1) + \frac{\partial p_i}{\partial (d_i+D_{\overline{i}})} \frac{\partial^2 D_{\overline{i}}}{\partial^2 d_i} + \frac{\partial F}{ \partial (d_i + D_{\overline{i}})}(\frac{\partial D_{\overline{i}}}{\partial d_i} + 1) \frac{\partial D_{\overline{i}}}{\partial d_i} \right) \\
& \;\;\;- \beta \frac{\partial^2 c_i}{\partial^2 d_i} \\
&=  \gamma \frac{\partial^2 p_i}{\partial^2 (d_i+D_{\overline{i}})}(\frac{\partial  D_{\overline{i}}}{\partial d_i} +1)^2  + \gamma \frac{\partial p_i}{\partial (d_i+D_{\overline{i}})} \frac{\partial^2  D_{\overline{i}}}{\partial^2 d_i} - \alpha \frac{\partial^2 c_i}{\partial^2 d_i}  \leq 0.
\end{split}
\end{equation}

\end{appendices}

\end{document}